\pgfplotsset{compat=newest}
\newtheorem{theorem}{Theorem}
\newtheorem{lemma}{Lemma}
\newenvironment{proof}{\noindent{\bf Proof:}}{\nobreak\hfill\nobreak$\Box$\par\medskip}
\newcommand{\remove}[1]{\relax}
\newcommand{\defeq}{\stackrel{\text{def}}{=}}
\newcommand{\ld}[1]{\ensuremath{{\rm ld}(#1)}}
\newcommand{\OPT}{\ensuremath{{\sf OPT}}}
\newcommand{\OPTs}{\ensuremath{\OPT(\sigma)}}
\newcommand{\OPTsa}{\ensuremath{\OPT(\sigma_{\alpha})}}
\newcommand{\DNF}{\ensuremath{{\sf DNF}}}
\newcommand{\DHk}{\ensuremath{{\sf DH}_k}}
\newcommand{\DHak}{\ensuremath{{\sf DH}^{b+\lg n}_k}}
\newcommand{\DHab}{\ensuremath{{\sf DH}^{b+\lg n}_2}}
\newcommand{\DHac}{\ensuremath{{\sf DH}^{b+\lg n}_3}}
\newcommand{\DHad}{\ensuremath{{\sf DH}^{b+\lg n}_4}}
\newcommand{\G}{\ensuremath{{\cal G}}}
\newcommand{\Gt}[1]{\ensuremath{\G_{{\hspace*{-0.05em}#1}}}}
\begin{document}
  \title{Online Bin Covering with Exact Parameter Advice}
   \author{Andrej Brodnik\footnotemark[1], Bengt J.~Nilsson\footnotemark[2], and Gordana Vujovic\footnotemark[3] \vspace*{1ex}\\
      \footnotemark[1] University of Ljubljana, Slovenia
      ({\tt andrej.brodnik@fri.uni-lj.si}) \\
      \footnotemark[2] Malmö University, Sweden 
      ({\tt bengt.nilsson.TS@mau.se}) \\
      \footnotemark[3] University of Ljubljana, Slovenia
      ({\tt gogili.vujovic@gmail.com})
    }
   \date{}
      

  \maketitle
  
\begin{abstract}
We show an asymptotic $2/3$-competitive strategy for the bin covering problem using $O(b+\log n)$ bits of advice, where $b$ is the number of bits used to encode a rational value and $n$ is the length of the input sequence. 
\end{abstract}

\section{Introduction}\label{sec:intro}\vspace*{-2ex}
In the bin covering problem, we are given a set of items of different sizes in the range $]0,1]$ and the goal is to find a maximum number of covered bins where a bin is covered if the sizes of items placed in it is at least~$1$.
It has been shown that the bin covering problem is NP-hard~\cite{ASSMANN1984:bincovering}.
The covering problem has applications in various situations in business and in industry, from packing snack pieces into boxes so that each box contains at least its defined net weight, to such complex problems as redistribution tasks/items to a maximum number of factories/bins, all working at or beyond the minimal feasible level. 

The bin covering problem was studied in-depth in Assmann's Ph.D.~thesis~\cite{assmann1983problems}. In the online version, items are delivered successively (one-by-one) and each item has to be packed, either in an existing bin or a new bin, before the next item arrives. The quality of online strategies is measured by their {\em competitive ratio}, the minimum ratio between the quality of the strategy's solution and that of an optimal one. The first known online strategy that has been proposed for the problem is {\em Dual Next Fit\/} (\DNF), analogous to Next Fit for the bin packing problem. The competitive ratio of \DNF\ is $1/2$ as proved by Assmann~{\em et~al}.~\cite{ASSMANN1984:bincovering}. Csirik and Totik~\cite{CSIRIK1988163} prove that no online algorithm can achieve a competitive ratio better than~$1/2$. Further lower bounds are given by Balogh~{\em et~al}.~\cite{balogh2019}.

Thus, the only way to improve the competitive ratio is to change the computational model. We argue that the online framework is too restrictive in that it allows an all-powerful adversary to construct the input sequence in the worst possible way for the strategy making a competitive ratio better than~$1/2$ unattainable. 
Boyar~{\em et~al}.~\cite{boyar2021} look at bin covering using extra advice provided by an oracle through an advice tape that the strategy can read. If the input sequence consists of $n$ items, they show that with $o(\log\log n)$ bits of advice, no strategy can have better competitive ratio than $1/2$. In addition, they show that a linear number of bits of advice is necessary to achieve competitive ratio greater than~$15/16$. They also provide a strategy with $O(\log\log n)$ bits of advice having competitive ratio~$8/15$.

We note that in order to  provide exact advice values, e.g., an integer bounded by the number $n$, $\Omega(\log n)$ bits are required. Hence, the strategy presented by Boyar~{\em et~al} only provides approximate values of certain key parameters to achieve the $8/15$ competitive ratio. One can therefore ask if the competitive ratio can be improved if the oracle can give exact parameter values as advice. We answer this question affirmatively.

\subsection{Our Result}

We show an asymptotic $2/3$-competitive strategy for the bin covering problem using $O(b+\log n)$ advice, where $b$ is the number of bits used to encode a rational value in the input sequence and $n$ is the length of the input sequence. 
\section{Preliminaries}

The 
{\em online bin covering problem\/} 
we consider is, given an input sequence $\sigma=(v_1,v_2,\ldots)$, of rational values $v_i\in[0,1]$, find the {\em maximum\/} number of unit sized 
 bins that can be covered online with items from the input sequence $\sigma$. 

We define 
the {\em load\/} of a bin $B$ to be
$\ld{B} \defeq \sum_{v\in B}v$.
We can similarly define the load of a sequence $\sigma$ to be $\ld{\sigma} \defeq \sum_{v\in\sigma}v$.

A {\em covering\/} is a partitioning of the items into bins $B_1,B_2,\ldots$ such that for each bin $B_j$
\begin{align}\label{eqn:feasible}
\ld{B_j}\geq1
\end{align}
and our objective is to find the maximum number of bins that satisfy Inequality~(\ref{eqn:feasible}). In contrast to the bin packing problem, a strategy can open any number of bins at any time. However, only those that are filled to a load of at least 1 are counted in the solution.

We measure the quality of an online maximization strategy by its {\em competitive ratio}, the maximum bound $R$ such that 
$\big|{\sf A}(\sigma)\big|\geq R\cdot\big|\OPT(\sigma)\big|-C$,
for every possible input sequence $\sigma$, where ${\sf A}(\sigma)$ is the solution produced by the strategy ${\sf A}$ on $\sigma$, $\OPT(\sigma)$ is a solution on $\sigma$ for which $|\OPT(\sigma)|$ is maximal, and $C$ is some constant. If $C=0$, we say that the competitive ratio is {\em strict}, otherwise it is~{\em asymptotic}.
The competitive ratio $R$ is thus a positive real value~$\leq1$, where equality to $1$ implies that the strategy is (asymptotically) optimal.

Of particular interest is the Dual Next Fit strategy (\DNF), where \DNF\ maintains one active bin $B$, and packs the items into $B$ until it is covered. It then opens a new empty bin as the active bin and continues the process. As said, Assmann~{\em et~al}.~\cite{ASSMANN1984:bincovering} prove that \DNF\ has a competitive ratio of $1/2$ and Csirik and Totik~\cite{CSIRIK1988163} prove that no online algorithm can achieve a competitive ratio better than~$1/2$.

If we know some further structure of the input sequence, we can do slightly better as is shown in the next lemma that we will make extensive use of in the sequel.

\begin{lemma}\label{lem:dnf}
The online strategy\/ \DNF\ for the  bin covering problem on an input sequence $\sigma_{\alpha}$ where the items have weights bounded by $\alpha<1$ has cost
$$
\big|\DNF(\sigma_{\alpha})\big| > \frac{1}{1+\alpha}\big|\OPTsa\big| - \frac{1}{1+\alpha}.
$$
\end{lemma}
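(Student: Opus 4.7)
The plan is to exploit the fact that every bin closed by \DNF\ is only marginally over-filled when the item sizes are capped at $\alpha$, and then to compare total loads.

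First I would observe the key structural fact about \DNF: when \DNF\ closes a bin $B$, the load just before inserting the final item was strictly less than $1$, and the final item has weight at most $\alpha$, so $\ld{B}<1+\alpha$. In addition, after processing all of $\sigma_{\alpha}$, there is at most one ``active'' bin left that has load strictly less than $1$ and is therefore not counted in $|\DNF(\sigma_{\alpha})|$.

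Next I would write a global load inequality. Since \DNF\ places every item of $\sigma_{\alpha}$ into some bin, the total load of the sequence satisfies
\[
\ld{\sigma_{\alpha}} \;<\; |\DNF(\sigma_{\alpha})|\cdot(1+\alpha) \;+\; 1,
\]
where the first term bounds the contribution of the covered bins and the $+1$ accounts for the possible leftover active bin. On the other hand, every bin in an optimal covering has load at least $1$, so $\ld{\sigma_{\alpha}} \geq |\OPTsa|$. Chaining these two inequalities and solving for $|\DNF(\sigma_{\alpha})|$ yields exactly
\[
|\DNF(\sigma_{\alpha})| \;>\; \frac{1}{1+\alpha}|\OPTsa| - \frac{1}{1+\alpha},
\]
as required.

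There is no real obstacle here; the only point that needs a little care is making sure the strict inequality is preserved, which comes from the fact that each closed \DNF-bin has load \emph{strictly} less than $1+\alpha$ (the pre-insertion load was strictly less than $1$). Everything else is a one-line rearrangement.
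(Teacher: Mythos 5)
Your proposal is correct and follows essentially the same argument as the paper: bound each covered \DNF-bin's load strictly below $1+\alpha$, add at most $1$ for the single uncovered active bin, compare the total load to $|\OPTsa|$, and rearrange. No meaningful difference from the paper's proof.
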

\begin{proof}
Assume that \DNF\ opens $s+1$ bins when accessing the sequence $\sigma_{\alpha}$, $s$ of which are covered. Since every item has weight at most $\alpha$, it means that each of the $s$ covered bins are filled to a total weight of less than $1+\alpha$. A bin not obeying this limit would have been covered already before \DNF\ places the last item in it, a contradiction. Thus the total load of the sequence $\sigma$ is
\begin{equation*}
(1+\alpha)s + 1 > \ld{\sigma_{\alpha}} \geq \big\lfloor\ld{\sigma_{\alpha}}\big\rfloor\geq\big|\OPTsa\big|,
\end{equation*}
whereby 
$|\DNF(\sigma_{\alpha})|=s>|\OPTsa|/(1+\alpha) - 1/(1+\alpha)$ as claimed.
\end{proof}

Another strategy of interest is Dual Harmonic (\DHk), where the strategy subdivides the items by sizes into $k$ groups,
\begin{equation*}
]0,1/k[,~[1/k,1/(k-1)[,\ldots,[1/3,1/2[,~[1/2,1[,
\end{equation*}
and packs items in each group, maintaining $k$ groups, according to~\DNF. Evidently, \DHk\ is at best $1/2$-competitive using the same argument as in Csirik and Totik~\cite{CSIRIK1988163}.

In certain situations, the complete lack of information about future input is too restrictive. In a sense, the online strategy plays a game against an all-powerful adversary who can construct the input sequence in the worst possible manner. To alleviate the adversary's advantage, we consider the following {\em advice-on-tape\/} model~\cite{bockenhauer2009advice:advice}. An {\em oracle\/} has knowledge about both the strategy and the full input sequence from the adversary, it writes information on an {\em advice tape\/} of unbounded length. The strategy can read bits from the advice tape at any time, before or while the requests are released by the adversary. The {\em advice complexity\/} is the number of bits read from the advice tape by the strategy. Since the length of the advice bit string is not explicitly given, the oracle is unable to encode information into the length of the string, thereby requiring some mechanism to infer how many bits of advice the strategy should read at each step. This can be done with a self-delimiting encoding that extends the length of the bit string only by an additive lower order term~\cite{Boyaretal:advice}.

A bit string $s$ is encoded as $e(s)=u(s)\circ b(s)\circ s$ ($\circ$ denotes concatenation), where $b(s)$ is a binary encoding of the length of the string $s$ and $u(s)$ consists of $\big|b(s)\big|$ ones followed by a single zero, thus indicating how many bits the strategy needs to read in order to obtain the length of the string $s$. The encoding has length at most $\big|e(s)\big|=|s|+2\lceil\log(|s|+1)\rceil+1$.
We henceforth assume that all advice information is encoded in this way. An integer $m$ can thus be encoded exactly using $O(\log m)$ bits and a rational value $m_e/m_d$, where $m_e$ and $m_d$ are integers can be encoded using $O(\log m_e + \log m_d)$ bits. If the rational value lies in the interval $[0,1]$, then $m_e\leq m_d$ and the encoding can be made using $O(\log m_d)$ bits.

We will base our strategy on \DHk\ with added advice to improve on the competitive ratio, as do Boyar~{\em et~al}.~\cite{boyar2021}.
{

\newcommand{\Gttj}{\ensuremath{\G_{\hspace*{-0.05em}t_1 t_2\cdots t_j}}}
\newcommand{\Gttjp}{\ensuremath{\G_{\hspace*{-0.05em}t_1 t_2\cdots t_j t_{j+1}}}}
\newcommand{\Gtwotwo}{\ensuremath{\G_{\hspace*{-0.05em}22}}}
\newcommand{\Gtwo}{\ensuremath{\G_{\hspace*{-0.05em}2}}}
\newcommand{\GS}{\ensuremath{{\cal G}_{\hspace*{-0.05em}S}}}
\newcommand{\Gthree}{\ensuremath{\G_{\hspace*{-0.05em}3}}}
\newcommand{\Gtwothree}{\ensuremath{\G_{\hspace*{-0.05em}23}}}

\newcommand{\IS}[1]{\ensuremath{{\cal I}_{S}(#1)}}
\newcommand{\It}[2]{\ensuremath{{\cal I}_{#2}(#1)}}
\newcommand{\Itwo}[1]{\ensuremath{{\cal I}_{2}(#1)}}

\section{Exact Advice-based Strategies for Bin Covering}\label{sec:1dimx}

Each item $v$ in the input sequence corresponds to a rational value $0<v<1$, since any $v$ above or equal to 1 will cover a bin and then the optimal solution can be assumed to place $v$ alone in a bin to cover it. Also, values of size 0 could be placed in the first covered bin without loss of generality.

Fix an integer $k\geq2$. We will subdivide the set of items into $k$ subsets, such that $1/t\leq v<1/(t-1)$ for each integer $2\leq t\leq k$, the {\em $t$-items}, and items $v<1/k$, the {\em small items}.

Consider a fixed optimal covering \OPTs\ for the input sequence $\sigma$. We can partition the solution \OPTs\ into groups, \Gttj, where the index $t_1 t_2\cdots t_j$, with $2\leq t_1\leq t_2\leq\cdots\leq t_j\leq k$, denotes that each bin in group \Gttj\ contains one $t_1$-item, one $t_2$-item, etc, multiplicity denoting the number of times each item type occurs in the bin. The group of bins that are only covered by small items is denoted by~\GS.

We say that a bin in group \Gttj\ is {\em easy}, if $\sum_{t\in\{t_1,t_2,\ldots,t_j\}}1/t\geq1$ and we can assume without loss of generality that easy bins contain no small items. 
Furthermore, we assume that if the bins in \Gttj\ are easy, then any bin group \Gttjp\ is empty, if $t_1 t_2\cdots t_j$ is a subsequence of $t_1 t_2\cdots t_{j+1}$, as the $t_{j+1}$-item in a bin in \Gttjp\ can be moved to other bins while we still maintain coverage in the bin. As an example, \Gtwotwo\ are those bins that each contain two $2$-items, so those bins are easy since two $2$-items together guarantee that the bin is covered and if a bin in an optimal solution contains two $2$-items, we assume that it does not contain any other items. As we noted, such items can be moved to other non-easy bins.

The non-easy bins may contain small items. Consider a bin in \Gtwothree, if $k\geq3$, that contain one $2$-item and one $3$-item. If the $2$-item is $0.6$ and the $3$-item is $0.4$, then the items cover the bin but this is not guaranteed since if the two items are $0.55$ and $0.35$, there must be small items in the bin for it to be covered. 

We also say that a bin in \Gttj\ is a {\em gap bin}, if $\sum_{t\in\{t_1,t_2,\ldots,t_j\}}1/(t-1)<1$, as each of these bins must contain small items to the amount of more than $1-\sum_{t\in\{t_1,t_2,\ldots,t_j\}}1/(t-1)$ to be covered. For example, the bins in \Gthree\ are gap bins since they all have to contain small items to an amount of more than~$1/2$ to be covered. 

The size of the optimal solution is given by
\begin{align}
|\OPTs| &= \sum_{\forall t_1 t_2\cdots t_j} |\Gttj| + |\GS|,
\end{align}
for all valid index combinations~$t_1 t_2\cdots t_j$.

We modify \DHk\ to operate on advice and describe this strategy, denoted \DHak\!, dependent on the parameter $k$, the number of item types used to partition the items into. The superscript indicates the amount of advice that the strategy admits. Let $x_1,\ldots,x_{n}$, $n=|\sigma|$, be an ordering of the items in $\sigma$, such that $x_i\geq x_{i+1}$, for $1\leq i<|\sigma|$. The oracle provides the strategy with an integer $m$ and the value $x_m$ through a self-delimiting encoding. 

The objective of the parameters $m$ and $x_m$ is to give the strategy enough information to emulate the construction of the bin group \Gtwo\ in an optimal solution, independently of the ordering in which items of the input sequence occur. The value of $m$ is balanced by the size $|\Gtwo|$, also designating the number of 2-items packed singly but together with small items in the bins of \Gtwo, and the amount of small items present in the bins of \Gtwo. In the optimal solution, each of these bins could be covered to exactly the value~$1$ but any online strategy may have to overfill by an amount~$<1/k$.
\begin{figure*}
\centering
\begin{tikzpicture}[scale=0.29]

\filldraw[thick, fill=blue!15!] (0.5,0) rectangle node{$0.53$} +(3.0, 5.3);
\filldraw[fill=blue!15!] (0.5,5.3) rectangle node{$0.51$} +(3.0, 5.1);

\filldraw[thick, fill=blue!15!] (4,0) rectangle node{$0.52$} +(3.0, 5.2);
\filldraw[fill=blue!15!] (4,5.2) rectangle node{$0.51$} +(3.0, 5.1);

\draw [decorate,decoration={brace,amplitude=5pt,mirror,raise=1ex}]
  (0.5,0.5) -- (7,0.5) node[midway, yshift=-1.5em]{$G_{22}$};
  
\filldraw[thick,fill=blue!15!] (7.5,0) rectangle node{$0.90$} +(3.0,9);
\filldraw[fill=red!15!] (7.5,9) rectangle node{$0.11$} +(3.0,1.1);

\filldraw[fill=blue!15!] (11,0) rectangle node{$0.80$} +(3.0,8);
\filldraw[fill=red!15!] (11,8) rectangle node{$0.20$} +(3.0,2);

\filldraw[fill=blue!15!] (14.5,0) rectangle node{$0.72$} +(3.0,7.2);
\filldraw[fill=red!15!] (14.5,7.2) rectangle node{$0.28$} +(3.0,2.8);

\filldraw[thick, fill=blue!15!] (18,0) rectangle node{$0.67$} +(3.0,6.7);
\filldraw[fill=red!15!] (18,6.7) rectangle node{$0.15$} +(3.0,1.5);
\filldraw[fill=red!15!] (18,8.2) rectangle node{$0.18$} +(3.0,1.8);

\filldraw[thick, fill=blue!15!] (21.5,0) rectangle node{$0.55$} +(3.0, 5.5);
\filldraw[fill=red!15!] (21.5,5.5) rectangle node{$0.30$} +(3.0,3);
\filldraw[fill=red!15!] (21.5,8.5) rectangle node{$0.15$} +(3.0,1.5);

\draw [decorate,decoration={brace,amplitude=5pt,mirror,raise=1ex}]
  (7.5,0.5) -- (24.5,0.5) node[midway, yshift=-1.5em]{$G_2$};

\filldraw[thick, fill=blue!15!] (25,0) rectangle node{$0.60$} +(3.0,6);
\filldraw[fill=green!15!] (25,6) rectangle node{$0.40$} +(3.0,4);
\draw [decorate,decoration={brace,amplitude=5pt,mirror,raise=1ex}]
  (25,0.5) -- (28,0.5) node[midway, yshift=-1.5em]{$G_{23}$};

\filldraw[thick, fill=green!15!] (28.5,0) rectangle node{$0.45$} +(3.0, 4.5);
\filldraw[fill=green!15!] (28.5,4.5) rectangle node{$0.35$} +(3.0, 3.5);
\filldraw[thick, fill=red!15!] (28.5,8) rectangle node{$0.15$} +(3.0, 1.5);
\filldraw[fill=red!15!] (28.5,9.5) rectangle node{$0.10$} +(3.0, 1.0);

\filldraw[thick, fill=green!15!] (32,0) rectangle node{$0.45$} +(3.0, 4.5);
\filldraw[fill=red!15!] (32,4.5) rectangle node{$0.25$} +(3.0, 2.5);
\filldraw[thick, fill=green!15!] (32,7) rectangle node{$0.35$} +(3.0, 3.5);
  
\filldraw[thick, fill=green!15!] (35.5,0) rectangle node{$0.42$} +(3.0, 4.2);
\filldraw[fill=green!15!] (35.5,4.2) rectangle node{$0.41$} +(3.0, 4.1);
\filldraw[thick, fill=red!15!] (35.5,8.3) rectangle node{$0.30$} +(3.0, 3.0);
\draw [decorate,decoration={brace,amplitude=5pt,mirror,raise=1ex}]
  (28.5,0.5) -- (38,0.5) node[midway, yshift=-1.5em]{$G_{33}$};

\filldraw[thick, fill=blue!15!] (0.5,-16) rectangle node{$0.80$} +(3.0, 8.8);
\filldraw[fill=red!15!] (0.5,-7.2) rectangle node{$0.25$} +(3.0, 2.5);

\filldraw[thick, fill=blue!15!] (4,-16) rectangle node{$0.90$} +(3.0, 9);
\filldraw[fill=red!15!] (4,-7) rectangle node{$0.20$} +(3.0, 2.8);
\draw [decorate,decoration={brace,amplitude=5pt,mirror,raise=0ex}]
  (0.5,-16) -- (7,-16) node[midway, yshift=-1em]{critical bins};
  
\filldraw[thick, fill=blue!15!] (8,-16) rectangle node{$0.72$} +(3.0, 7.2);
\filldraw[fill=blue!15!] (8,-8.8) rectangle node{$0.51$} +(3.0, 5.1);

\filldraw[thick, fill=blue!15!] (11.5,-16) rectangle node{$0.67$} +(3.0, 6.7);
\filldraw[fill=blue!15!] (11.5,-9.3) rectangle node{$0.60$} +(3.0, 6.0);

\filldraw[thick, fill=blue!15!] (15,-16) rectangle node{$0.55$} +(3.0, 5.5);
\filldraw[fill=blue!15!] (15,-10.5) rectangle node{$0.53$} +(3.0, 5.3);

\filldraw[thick, fill=blue!15!] (18.5,-16) rectangle node{$0.52$} +(3.0, 5.2);
\filldraw[fill=blue!15!] (18.5,-10.8) rectangle node{$0.51$} +(3.0, 5.1);

\filldraw[thick, fill=green!15!] (22,-16) rectangle node{$0.45$} +(3.0, 4.5);
\filldraw[fill=green!15!] (22,-11.5) rectangle node{$0.45$} +(3.0, 4.5);
\filldraw[fill=green!15!] (22,-7) rectangle node{$0.42$} +(3.0, 4.2);

\filldraw[thick, fill=green!15!] (25.5,-16) rectangle node{$0.41$} +(3.0, 4.1);
\filldraw[fill=green!15!] (25.5,-11.9) rectangle node{$0.35$} +(3.0, 3.5);
\filldraw[fill=green!15!] (25.5,-8.4) rectangle node{$0.35$} +(3.0, 3.5);

\filldraw[fill=red!15!] (29,-16) rectangle node{$0.28$} +(3.0, 2.5);
\filldraw[fill=red!15!] (29,-13.5) rectangle node{$0.11$} +(3.0, 1.1);
\filldraw[fill=red!15!] (29,-12.4) rectangle node{$0.15$} +(3.0, 1.5);
\filldraw[fill=red!15!] (29,-10.9) rectangle node{$0.15$} +(3.0, 1.5);
\filldraw[fill=red!15!] (29,-9.4) rectangle node{$0.15$} +(3.0, 1.5);
\filldraw[fill=red!15!] (29,-7.9) rectangle node{$0.10$} +(3.0, 1.0);
\filldraw[thick, fill=red!15!] (29,-6.9) rectangle node{$0.30$} +(3.0, 3);
\end{tikzpicture}
    \caption{\label{fig:example}(top) An optimal covering for instance $\sigma$ with~11 bins. (bottom) A \DHac\ covering with~9~bins.}
\end{figure*}
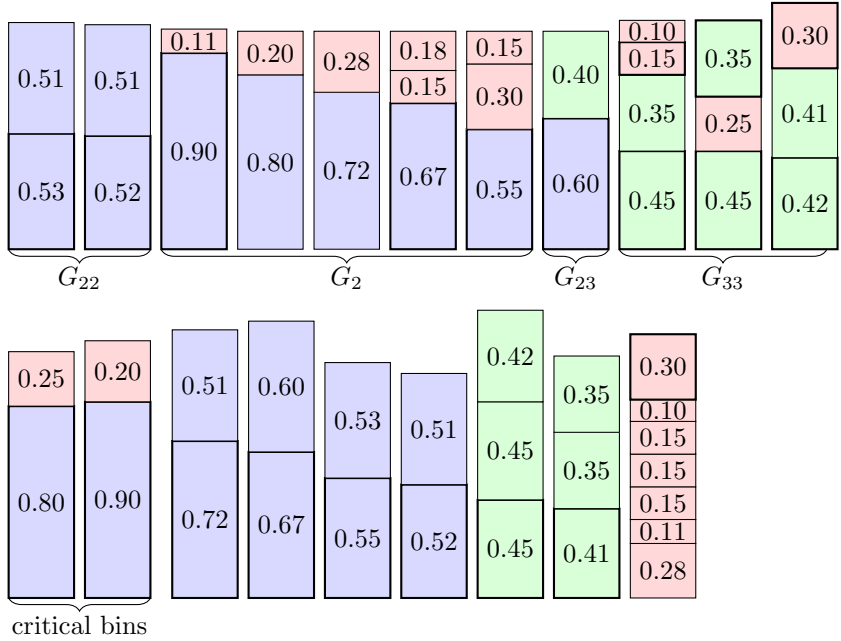

The oracle, given the knowledge of the input sequence and the strategy \DHak\!, can determine the best value for $m$ by emulating \DHak\ on the input sequence, for each integer $0\leq m\leq|\Gtwo|$, if it knows an optimal solution. If it does not, it counts the number of $2$-items, $n_2$, in the input sequence and tries all values between $0\leq m\leq n_2$ and reports an $m$ value for which \DHak\ delivers as large a solution as possible. Since $n_2\geq|\Gtwo|$ the best solution is no worse than if $m$ is restricted to be~$\leq|\Gtwo|$.
Given a specific integer $m$, the oracle can provide the value $x_m$ using the standard selection algorithm~\cite{BlumFloydPrattRivestTarjan}.

The strategy \DHak\ initially reads the parameters $m$ and $x_m$ and opens $m$ bins that we call {\em critical bins\/} and that will each be covered with one of the $m$ largest $2$-items of the input sequence $\sigma$ together with small items. Initially, each critical bin is assumed to have a {\em virtual load} of $x_m$. When an item of size $\geq x_m$ is placed in a critical bin,  its virtual load is increased to the actual value of the item. The strategy further opens a {\em $t$-bin\/} for every item type $t\in\{2,\ldots,k\}$, and a {\em small bin\/} for the small items. As the next item $v$ of the input sequence arrives, it is handled as follows:
\begin{enumerate}
\item
if $x_m\leq v$, place $v$ in the next critical bin that does not yet contain a $2$-item and update the virtual load of the critical bin,
\item
if $1/k\leq v<x_m$ is a $t$-item, place $v$ in the corresponding $t$-bin using \DNF. If the bin becomes covered, close it and open a new $t$-bin,
\item
if $v<1/k$ is small, place $v$ in the next critical bin that does not contain small items up to a virtual load of at least $1$ and update the virtual load of this critical bin. If all critical bins are filled up to a virtual load of 1, place $v$ in the small bin using \DNF. If the small bin becomes covered, close it and open a new small bin.
\end{enumerate} 

Consider the following example input sequence 
\begin{align*}
\sigma =\mbox{}& 
\big(0.25,0.80,0.72,0.20, 0.90, 0.45, 0.51, 
0.67, 0.45, 0.60, 0.42, 0.55, 0.53, 0.28, 0.11, 
\\&\
0.15, 0.52, 0.15, 0.51,0.41, 0.15, 0.35, 0.10,
0.35, 0.30, 0.30, 0.40, 0.18\big),
\end{align*}
 taken from~\cite{boyar2021} and slightly modified. We run it through \DHac\!, so $k=3$ given that the oracle provides the values $m=2$ and $x_m=0.80$. Figure~\ref{fig:example} shows the optimal solution and the \DHac\ solution on the sequence~$\sigma$.

We prove the following simple result to give the idea for the more extended version in Theorem~\ref{thm:twothirds}.
\begin{theorem}\label{thm:pointsix}
Assume that the strategy\/ \DHab\ has access to the exact values of $m$ and $x_m$, then it has asymptotic competitive ratio
$$
\big|\DHab(\sigma)\big|\geq\frac{3}{5}\big|\OPTs\big| - \frac{19}{15},
$$
for serving any sequence $\sigma$ of size~$n$.
\end{theorem}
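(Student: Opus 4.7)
The plan is to bound $|\DHab(\sigma)|$ from below by summing contributions from the three classes of bins the strategy maintains: the $m$ critical bins, the single $2$-bin, and the single small bin. Since the oracle picks the value of $m$ that maximises the strategy's output, any fixed choice gives a valid lower bound, and I would analyse the choice $m = |\Gtwo|$.

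With this choice, the top $m$ $2$-items of $\sigma$ are placed one per critical bin. The $2$-bin receives the remaining $n_2 - m = 2|\Gtwotwo|$ $2$-items, each of size at least $1/2$, so \DNF\ pairs them to cover at least $|\Gtwotwo| - 1/2$ bins. For the small bin, Lemma~\ref{lem:dnf} with $\alpha = 1/2$ yields a contribution of more than $\tfrac{2}{3}\bigl|\OPT(\sigma_s)\bigr| - \tfrac{2}{3}$, where $\sigma_s$ is the suffix of small items reaching the small bin.

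The main technical step is to lower-bound the number of covered critical bins together with $|\OPT(\sigma_s)|$. The total weight $S$ of small items in $\sigma$ satisfies $S \geq |\GS| + |\Gtwo| - V$, where $V$ is the sum of the top $m$ $2$-items, since $\OPTs$ uses small items of weight at least $1 - v_i$ in each of its $G_2$-bins and at least $1$ in each of its $G_S$-bins. Each critical bin of the strategy, by the virtual-load rule, consumes small items of total weight strictly less than $1 - x_m + 1/2$; moreover, under the adversary's worst ordering, in which small items arrive before the corresponding $2$-item, each critical bin may overshoot by up to $1/2$ compared with the just-enough filling $\OPTs$ performs in its $G_2$-bins. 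The total leakage is therefore at most $m/2$, giving $|\OPT(\sigma_s)| \geq |\GS| - m/2 - O(1)$ whenever all $m$ critical bins cover. The complementary case, in which some critical bins remain uncovered because small items run out, is handled separately by bounding the number of uncovered critical bins against the deficit in $S$.

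Summing the three contributions and using $|\OPTs| = |\Gtwotwo| + |\Gtwo| + |\GS|$ yields, after collecting the additive losses into the constant $-19/15$, the inequality $|\DHab(\sigma)| \geq \tfrac{3}{5}|\OPTs| - \tfrac{19}{15}$. The main obstacle is the trade-off between the critical and small-bin contributions: the $m/2$ leakage erodes part of the $\tfrac{2}{3}$ credit on the small bin, and balancing this loss against the full credit earned per covered critical bin is precisely what forces the competitive ratio down from $\tfrac{2}{3}$ to $\tfrac{3}{5}$.
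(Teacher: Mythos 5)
Your high-level decomposition (critical bins, a $2$-bin run by \DNF, and Lemma~\ref{lem:dnf} with $\alpha=1/2$ on the leftover small items) matches the paper's, but the concrete choice $m=|\Gt{2}|$ is fatal, and the case you defer --- ``some critical bins remain uncovered because small items run out'' --- is not a patchable boundary case; it is exactly where the bound drops below $3/5$. Consider an instance whose optimal covering consists only of $\Gt{2}$-bins: $N/2$ bins each pair a $2$-item of size $1-\delta$ with a small item of size $\delta$, and $N/2$ bins each pair a $2$-item of size $1/2+\varepsilon$ with a small item of size $1/2-\varepsilon$. With $m=N$ we get $x_m=1/2+\varepsilon$, so every critical bin starts at virtual load $1/2+\varepsilon$ and demands about $1/2$ worth of small items before being declared virtually full; the total supply of small items is only about $N/4$, so when the adversary sends the small items first, roughly half the critical bins never receive any small items, then receive their $2$-item, and end uncovered. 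The strategy covers about $N/2=|\OPTs|/2$ bins, so no accounting of ``leakage'' can recover $3/5$ for this choice of $m$; the claim that any fixed $m$ ``gives a valid lower bound'' is true but useless here, because the bound provable for $m=|\Gt{2}|$ is only $1/2$.

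The missing idea is that $m$ must be chosen small enough that coverage of \emph{every} critical bin is guaranteed. The paper takes the largest $m$ with $2m+|{\cal I}_2(m)|\le|\Gt{2}|$ (so $m$ is between $|\Gt{2}|/3$ and $|\Gt{2}|/2$), where ${\cal I}_2(m)$ indexes the critical bins whose last small item came from an optimal $\Gt{2}$-bin. This reserves, for each critical bin, the small items of one \emph{unused} optimal $\Gt{2}$-bin (at least $1-w_m\ge 1-x_m$ worth, which suffices since $y_i\le 1-w_m$) plus one extra small item to absorb the overshoot $z_i<1/2$. The ratio $3/5$ then comes from the trade-off your sketch gestures at but never quantifies: decreasing $m$ by one converts a full critical bin into half a $2$-bin while freeing small items credited at rate $2/3$ via Lemma~\ref{lem:dnf}, and optimizing this against the adversary's freedom to place the items $z_i$ in $\Gt{S}$ or in $\Gt{2}$ yields coefficients $1$, $3/5$, $3/5$ on $|\Gt{22}|$, $|\Gt{2}|$, $|\Gt{S}|$ respectively. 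Without fixing this choice of $m$ and making the coverage guarantee explicit, the proposed argument does not establish the theorem.
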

\begin{proof}
Since our strategy uses only two item types, $2$-items and small items, the optimal solution consist of three bin groups with $|\OPTs|=|\Gtwotwo|+|\Gtwo|+|\GS|$.
Note that the number of 2-items in $\sigma$ is exactly 
\begin{align}\label{eq:twoeq}
T_2=2|\Gtwotwo| + |\Gtwo|.
\end{align}

Consider now some arbitrary set of covered bins \G, where each bin only contains small items. Assume that these bins have a total load of $S=\sum_{B\in\G}\ld{B}\geq|\G|$ and that the input sequence restricted to these small items is $\sigma_S$. From Lemma~\ref{lem:dnf} we have that 
\begin{equation}
\big|\DNF(\sigma_S)\big|  > \frac{2}{3}S - \frac{2}{3} \geq \frac{2}{3}\big|\G\big| - \frac{2}{3}
\end{equation}

We let $S_S\defeq\sum_{B\in\GS}\ld{B}$ be the total load of the small items covering the bins in~$\GS$.

Next, we analyze the competitive ratio of the critical bins. Consider a bin $B$ in \Gtwo. It contains precisely one 2-item having some weight $\geq1/2$. Assume that we have sorted the bins in \Gtwo\ in order of decreasing weight of its 2-item, i.e., we have an ordering of the bins $B_1,\ldots,B_{|\Gtwo|}$ such that the weight if the 2-item in $B_i$ is $w_i$, $1\leq i\leq|\Gtwo|$ and $w_i\geq w_{i+1}$ for $1\leq i\leq|\Gtwo|-1$; see Figure~\ref{criticalfig}. Let $u_i=\ld{B_i}-w_i$ be the weight of the small items in~$B_i$ and let $S_2=\sum_{1\leq i\leq|\Gtwo|}u_i$ be the total load of the small items covering the bins in~\Gtwo. It is clear that $S_2\geq(|\Gtwo|-m)(1-w_m)$ for arbitrary choice of $m\leq|\Gtwo|$ since $B_{m+1},\ldots,B_{|\Gtwo|}$, each contains at least $1-w_m$ amount of small items. 
\begin{figure*}
\centering
\input{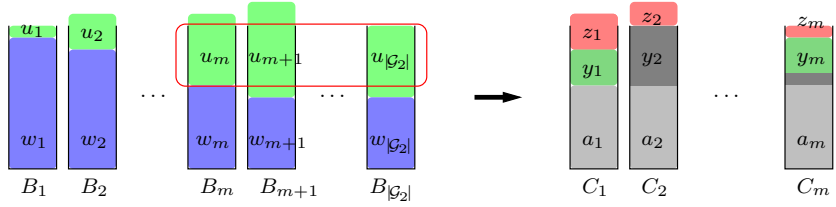}
\caption{\label{criticalfig}The critical bins and their relationship to the \Gtwo-bins in the optimal covering. In the \Gtwo-bins, blue are 2-items and light green are the small items. In the critical bins, red represents the last small item placed in the bin, dark green are the remaining small items, grey items are the 2-items, and  dark grey represents the overlap between the virtual and actual load of the~2-item.}
\end{figure*}

Let $C_i$, $1\leq i\leq m$, be the critical bins opened by our strategy in the order they are constructed. Let $a_i\geq x_m\geq w_m$ denote the weight of the 2-item in $C_i$, let $z_i$ be the weight of the last small item placed in $C_i$ by our strategy, and let $y_i=\ld{C_i}-a_i-z_i$ be the weight of the remaining small items in~$C_i$.  By construction, $y_i\leq1-w_m$ for each $1\leq i\leq m$.

Let $\Itwo{m}\subseteq\{1,\ldots,m\}$ be the set of indices $i$ such that the last small element (of weight $z_i$) that was placed in critical bin $C_i$ was placed by the optimal solution in a bin in \Gtwo.  Similarly, let $\IS{m}=\{1,\ldots,m\}\setminus\Itwo{m}$ be the set of remaining indices.
The possible values of $m$ range between $0\leq m\leq m^+=\lfloor(|\Gtwo|-|\Itwo{m^+}|)/2\rfloor$, where $m^+$ is the largest integer such that $2m^+ +|\Itwo{m^+}|\leq|\Gtwo|$, since the strategy needs to guarantee that it can cover all the critical bins.

The oracle can ascertain the value $m^+$ by emulating the strategy over all possible integer values and it reveals the values $m=m^+=\lfloor(|\Gtwo|-|\Itwo{m^+}|)/2\rfloor$ and $x_m=x_{m^+}$, the size of the $m^{\rm th}$ largest item in the input sequence $\sigma$ to the strategy. The strategy constructs $m$ critical bins, $\lfloor(T_2-m)/2\rfloor$ 2-bins and some bins corresponding to the amount of unused small items, giving us
{\small
\begin{align*}
\big|\DHab(\sigma)\big| 
&>
m + \bigg\lfloor\frac{T_2-m}{2}\bigg\rfloor 
+ \frac{2}{3}\bigg(S_S+S_{2}
-\bigg(\sum_{i=1}^m y_i+z_i\bigg)\bigg) - \frac{2}{3}
\\
&\geq
m + \frac{T_2-m}{2} -\frac{1}{2} + \frac{2}{3}\bigg(S_S + S_{2} -\sum_{i=1}^{m}(1-w_{m}) 
-\sum_{i=1}^{m}z_i \bigg) -  \frac{2}{3}
\\
&=
\frac{m}{2} + \frac{T_2}{2} + \frac{2}{3}\bigg(S_S-\!\!\!\!\!\!\sum_{i\in\IS{m}}\!\!\!\!\!z_i\bigg)
+ \frac{2}{3}\underbrace{\bigg(S_{2} -\sum_{i=1}^{m}(1-w_m)-\!\!\!\!\!\!\sum_{i\in\Itwo{m}}\!\!\!\!\!z_i\bigg)}_{\geq0}
 -\frac{7}{6}
\\
&\geq
\frac{m}{2} + \frac{T_2}{2} + \frac{2}{3}\bigg(S_S-\!\!\!\!\!\!\sum_{i\in\IS{m}}\!\!\!\!\!z_i\bigg) -\frac{7}{6}
\ \ \geq \ \
\frac{m}{2} + \frac{T_2}{2} + \frac{3}{5}\bigg(S_S-\!\!\!\!\!\!\sum_{i\in\IS{m}}\!\!\!\!\!z_i\bigg)  -\frac{7}{6}
\\
&>
\frac{m}{2} + \frac{T_2}{2} + \frac{3S_S}{5} - \frac{3|\IS{m}|}{10}  -\frac{7}{6}
\ \ = \ \
\frac{m}{5} + \frac{T_2}{2} + \frac{3S_S}{5} + \frac{3|\Itwo{m}|}{10} -\frac{7}{6}
\\
&=
\frac{\lfloor(|\Gtwo|-|\Itwo{m}|)/2\rfloor}{5} + \frac{2|\Gtwotwo| + |\Gtwo|}{2} + \frac{3S_S}{5} 
+ \frac{3|\Itwo{m}|}{10} -\frac{7}{6}
\ \ \geq \ \
|\Gtwotwo| + \frac{3|\Gtwo|}{5} + \frac{3|\GS|}{5} -\frac{19}{15}
\\
&\geq
\frac{3}{5}\big|\OPTs\big| -\frac{19}{15}
\end{align*}%
}%
bins, by applying Equality~(\ref{eq:twoeq}) in the last inequality, while using that each $z_i<1/2$, that $S_S\geq\sum_{i\in\IS{m}}z_i$, that critical bin $C_i$ can be covered by a 2-item of size at least $w_m$ plus the small items from a bin among $B_{m+1},\ldots,B_{|\Gtwo|}$ and one extra small item from a bin among $\big\{B_{m+1},\ldots,B_{|\Gtwo|}\big\}\cup\GS$; see Figure~\ref{criticalfig}, and that $m=|\Itwo{m}|+|\IS{m}|$, for any $m$. The competitive ratio is the smallest coefficient of any of the terms corresponding to bin groups, since an adversary can ensure that the groups with larger coefficient contain no bins. This gives a competitive ratio of~$3/5=0.6$.
\end{proof}
We have presented the proof for two item types to illustrate the general idea of the next result. For two item types, the number of bin groups is three, whereas for four item types, the number of bin groups increases to 20, but the proof steps are exactly the same. For completeness we mention that $|\DHac(\sigma)|\geq9|\OPTs|/14 - 97/42$, where $9/14\approx0.64285\ldots$.

\begin{lemma}\label{lem:twothirds}
Assume that the strategy \DHad\ has access to the exact values of $m$ and $x_m$, then it has asymptotic competitive ratio
$$
\big|\DHad(\sigma)\big|\geq\frac{2}{3}\big|\OPTs\big| - \frac{173}{60}
$$
for serving any sequence $\sigma$ of size~$n$.
\end{lemma}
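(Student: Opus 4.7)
The plan is to extend the argument of Theorem~\ref{thm:pointsix} from two item types to four. First I would enumerate the non-empty bin groups of~\OPTs. Using the convention that any extension of an easy group is empty, and that every quadruple is automatically easy (since each $1/t_i \geq 1/4$ gives $\sum_{i=1}^4 1/t_i \geq 1$), only 20 groups can be non-empty: the singletons $\Gtwo, \Gthree, \Gt{4}, \GS$; the easy pair $\Gtwotwo$ together with the non-easy pairs $\Gtwothree, \Gt{24}, \Gt{33}, \Gt{34}, \Gt{44}$; the easy triples $\Gt{233}, \Gt{234}, \Gt{244}, \Gt{333}$ and the non-easy triples $\Gt{334}, \Gt{344}, \Gt{444}$; and the easy quadruples $\Gt{3344}, \Gt{3444}, \Gt{4444}$. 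I would then express the item counts $T_2, T_3, T_4$ as linear combinations of these 20 group sizes, generalizing Equality~(\ref{eq:twoeq}).

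The critical-bin construction is reused verbatim: the oracle selects the largest $m$ admitting a feasible $\Gtwo$-emulation, reports $m$ and $x_m$, and the strategy opens $m$ critical bins, packs the remaining $T_2 - m$ two-items pairwise into 2-bins, and runs \DNF\ separately on the 3-item stream (Lemma~\ref{lem:dnf} with $\alpha = 1/2$, absorption rate $2/3$), the 4-item stream ($\alpha = 1/3$, rate $3/4$), and the residual small-item stream after topping up the critical bins ($\alpha = 1/4$, rate $4/5$).

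Summing these contributions and substituting for $T_2, T_3, T_4$ yields a lower bound on $|\DHad(\sigma)|$. The key observation is that weakening the small-item rate from $4/5$ down to $2/3$ (valid on non-negative loads) together with $z_i < 1/4$ (instead of $z_i < 1/2$ as in the two-item case) makes the $|\IS{m}|$ contribution cancel cleanly: the residual $m/3 + |\Itwo{m}|/6$ combined with $m \geq (|\Gtwo|-|\Itwo{m}|)/2 - 1/2$ yields a $|\Gtwo|$ boost of $|\Gtwo|/6 - 1/6$, making the $|\Gtwo|$ coefficient exactly $1/2 + 1/6 = 2/3$. A short case check then verifies that every other group's coefficient is at least $2/3$: for easy groups without 2-items this is immediate from $\min(2/3, 3/4) = 2/3$ combined with the coverage identity $\sum v_i \geq 1$; for easy groups containing a 2-item the $T_2/2$ term contributes $1/2$ per 2-item, to which the $2/3$- or $3/4$-rate contributions of the accompanying 3- and 4-items add strictly more than $1/6$; and for the non-easy groups the small-item residue absorbed at rate $2/3$ matches the bound, with any 4-item content providing strict slack via the $3/4$ rate.

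The main obstacle is the case-analysis bookkeeping across 20 groups, in particular for the non-easy pairs $\Gt{23}, \Gt{24}, \Gt{34}, \Gt{44}$ and the non-easy triples $\Gt{334}, \Gt{344}, \Gt{444}$, whose loads split between the 3-, 4- and small-item streams. The additive constant is $-1/2 - 2/3 - 3/4 - 4/5 - 1/6 = -173/60$, gathering the floor loss from $(T_2-m)/2$, the three Lemma~\ref{lem:dnf} penalties $-2/3, -3/4, -4/5$ for the three \DNF\ streams, and the $-1/6$ produced when substituting the lower bound for $m$.
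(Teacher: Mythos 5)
Your proposal follows the paper's proof in every structural respect: the same 20-group decomposition with the same easy/non-easy classification, the same expressions for $T_2,T_3,T_4$, the same critical-bin machinery with the oracle maximizing $m$, the same cancellation of each $z_i$-penalty ($z_i<\tfrac14$ absorbed at rate $\tfrac23$ costs $\tfrac16$, offset by the $+\tfrac16$ per index released from $-|{\cal I}_S(m)|/6$), the same derivation of the $\Gt{2}$ coefficient $\tfrac12+\tfrac16=\tfrac23$, and the same constant $-\tfrac{173}{60}$. The one genuine deviation is your treatment of the $3$- and $4$-item streams: you invoke Lemma~\ref{lem:dnf} with $\alpha=\tfrac12$ and $\alpha=\tfrac13$, crediting each such item $\tfrac23$ (resp.\ $\tfrac34$) of its \emph{load}, whereas the paper counts covered bins exactly, as $\lfloor T_3/3\rfloor$ and $\lfloor T_4/4\rfloor$, crediting each $t$-item a flat $1/t$. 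Your credit is pointwise weaker (a $3$-item of weight $\tfrac13$ earns only $\tfrac29$), but the accounting still closes, and arguably more transparently: with $2$-items earning $\tfrac12$ each via $T_2/2$ and every other item earning at least $\tfrac23$ of its load, a covered bin $B$ with no $2$-item earns at least $\tfrac23\ld{B}\geq\tfrac23$, a bin with a $2$-item earns $\tfrac12$ plus at least $\tfrac16$ from its guaranteed $3$- or $4$-item companion (and $\Gt{22}$ earns $1$ outright, a case your ``accompanying $3$- and $4$-items'' phrasing should explicitly cover), so only $\Gt{2}$ genuinely needs the critical-bin argument. The additive constants agree because the Lemma~\ref{lem:dnf} losses $\tfrac23$ and $\tfrac34$ happen to coincide with the floor losses on $T_3/3$ and $T_4/4$. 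The paper's sharper per-item counting is what makes several group coefficients come out exactly $\tfrac23$, consistent with the tightness discussion; your variant leaves a little slack in those groups but establishes the same bound.
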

\begin{proof}
The number of $t$-items, for $t=2$,~$3$, and~$4$, in the instance is
\begin{align}
T_2&=|\Gt{2}| + 2|\Gt{22}| + |\Gt{23}| + |\Gt{24}| + |\Gt{233}|  + \mbox{}
 + |\Gt{234}| + |\Gt{244}|,\label{eq:t2eq}
 \\
T_3&=|\Gt{3}| + |\Gt{23}| + 2|\Gt{33}| + |\Gt{34}| + 2|\Gt{233}| + \mbox{}
 + |\Gt{234}| + 3|\Gt{333}| + 2|\Gt{334}| + |\Gt{344}| + \mbox{}
\nonumber\\
&\qquad
 + 2|\Gt{3344}|, + |\Gt{3444}|,\label{eq:t3eq}
 \\
T_4&=|\Gt{4}| + |\Gt{24}| + |\Gt{34}| + 2|\Gt{44}| + |\Gt{234}| + \mbox{}
 + 2|\Gt{244}| + |\Gt{334}| + 2|\Gt{344}| + 3|\Gt{444}| + \mbox{}
\nonumber\\
&\qquad
 + 2|\Gt{3344}| + 3|\Gt{3444}| + 4|\Gt{4444}|.\label{eq:t4eq}
\end{align}

For each non-easy bin group $\Gtwo,\ldots,\Gt{444}$ (there are eight of them), let $S_{t_1\cdots t_4}$ denote the weight of the small items that the optimum solution packs in the bins of group~\Gt{t_1\cdots t_4}. 
In addition, we denote by $S_S=\sum_{B\in\GS}\ld{B}$ the total load of the small items covering the bins in~$\GS$.

As in the previous proof, we first consider some arbitrary set of covered bins \G, where each bin only contains small items. Assume that these bins have a total load of $S=\sum_{B\in\G}\ld{B}\geq|\G|$ and that the input sequence restricted to these small items is $\sigma_S$. From Lemma~\ref{lem:dnf} we have that 
\begin{equation}
\big|\DNF(\sigma_S)\big|  > \frac{4}{5}S - \frac{4}{5} \geq \frac{4}{5}\big|\G\big| - \frac{4}{5}
\end{equation}

We can analyze the competitive ratio of the critical bins exactly as in the previous proof. First consider a decreasing ordering of the bins $B_1,\ldots,B_{|\Gtwo|}$ in \Gtwo\ by the weight of their 2-item, $w_i$. We let $u_i=\ld{B_i}-w_i$ be the weight of the small items in~$B_i$, whereby $S_2\geq\big(|\Gtwo|-m\big)\cdot\big(1-w_m\big)$ for arbitrary choice of $m\leq|\Gtwo|$.
%
The critical bins, $C_i$, $1\leq i\leq m$, each contains one 2-item of weight $a_i$, a small item of weight $z_i$ that was the last small item placed in $C_i$ by our strategy, and small items to the weight of $y_i=\ld{C_i}-a_i-z_i$.  Again, 
$y_i\leq1-w_m$, for each $1\leq i\leq m$.

Consider next the gap bins in the optimal solution. These are the bins in groups \Gt{3}, \Gt{4}, \Gt{34}, and \Gt{44}. Each bin in these groups is guaranteed to have small items to the amount of at least $1/2$, $2/3$, $1/6$, and $1/3$, respectively. Thus, for each of those groups we have $S_3\geq|\Gt{3}|/2$, $S_4\geq2|\Gt{4}|/3$, $S_{34}\geq|\Gt{34}|/6$, and $S_{44}\geq|\Gt{44}|/3$.

For each group of non-easy bins $\Gtwo,\ldots,\Gt{444}$, let $\It{m}{t_1\cdots t_4}\subseteq\{1,\ldots,m\}$ be the set of indices $i$ such that the last small element (of weight $z_i$) that was placed in critical bin $C_i$ was placed by the optimal solution in a bin from bin group~\Gt{t_1\cdots t_4}. 
Also, let $\IS{m}=\big\{1,\ldots,m\big\}\setminus\big(\bigcup_{t_1\cdots t_4\not\in{\it Easy}}\It{m}{t_1\cdots t_4}\big)$ be the set of remaining indices.
As before, the possible values of $m$ range between $0\leq m\leq m^+=\big\lfloor(|\Gtwo|-|\Itwo{m^+}|)/2\big\rfloor$, where $m^+$ is the largest integer such that $2m^+ +|\Itwo{m^+}|\leq|\Gtwo|$, since the strategy needs to guarantee that it can cover all the critical bins.
The oracle ascertains the maximum integer $m^+$ and reveals the values  $m=m^+=\big\lfloor(|\Gtwo|-|\Itwo{m^+}|)/2\big\rfloor$ and $x_m=x_{m^+}$, 
so our strategy constructs $m$ critical bins, $\big\lfloor(T_2-m)/2\big\rfloor$ $2$-bins, $\big\lfloor T_3/3\big\rfloor$ $3$-bins, $\big\lfloor T_4/4\big\rfloor$ $4$-bins, and bins corresponding to the amount of unused small items, giving us
{\small
\begin{align*}
\big|\DHad(\sigma)\big| 
>&\ 
m + \left\lfloor\frac{T_2-m}{2}\right\rfloor + \left\lfloor\frac{T_3}{3}\right\rfloor + \left\lfloor\frac{T_4}{4}\right\rfloor
+\frac{4}{5}\bigg(S_S+S_2+S_3+S_{4}+S_{34}+S_{44}
-\bigg(\sum_{i=1}^m y_i+z_i\bigg)\bigg) - \frac{4}{5}
\\
\geq&\
\frac{m}{2} + \frac{T_2}{2} + \frac{T_3}{3}  + \frac{T_4}{4} + \frac{4}{5}\bigg(S_S-\!\!\!\!\!\!\sum_{i\in\IS{m}}\!\!\!\!\!z_i\bigg) 
+ \frac{4}{5}\bigg(S_{3}-\!\!\!\!\!\!\sum_{i\in\It{m}{3}}\!\!\!\!\!z_i\bigg) 
+ \frac{4}{5}\bigg(S_{4}-\!\!\!\!\!\!\sum_{i\in\It{m}{4}}\!\!\!\!\!z_i\bigg)
\\
&
+ \frac{4}{5}\bigg(S_{34}-\!\!\!\!\!\!\sum_{i\in\It{m}{34}}\!\!\!\!\!z_i\bigg) + \frac{4}{5}\bigg(S_{44}-\!\!\!\!\!\!\sum_{i\in\It{m}{44}}\!\!\!\!\!z_i\bigg)
+ \frac{4}{5} \underbrace{\bigg(S_{2} -\sum_{i=1}^{m}(1-w_m)-\!\!\!\!\!\!\sum_{i\in\It{m}{2}}\!\!\!\!\!z_i\bigg)}_{\geq0}  -\frac{163}{60}
\nonumber\\
\end{align*}
\begin{align*}
\geq&\
\frac{m}{2} + \frac{T_2}{2} + \frac{T_3}{3}  + \frac{T_4}{4} + \frac{2}{3}\bigg(S_S-\!\!\!\!\!\!\sum_{i\in\IS{m}}\!\!\!\!\!z_i\bigg) 
+ \frac{2}{3}\bigg(S_{3}-\!\!\!\!\!\!\sum_{i\in\It{m}{3}}\!\!\!\!\!z_i\bigg) + \frac{5}{8}\bigg(S_{4}-\!\!\!\!\!\!\sum_{i\in\It{m}{4}}\!\!\!\!\!z_i\bigg)
\nonumber\\
&
+ \frac{1}{2}\bigg(S_{34}-\!\!\!\!\!\!\sum_{i\in\It{m}{34}}\!\!\!\!\!z_i\bigg) + \frac{1}{2}\bigg(S_{44}-\!\!\!\!\!\!\sum_{i\in\It{m}{44}}\!\!\!\!\!z_i\bigg) -\frac{163}{60}
\nonumber\\
\geq&\
\frac{m}{2} + \frac{T_2}{2} + \frac{T_3}{3} + \frac{T_4}{4} + \frac{2S_S}{3} - \frac{|\IS{m}|}{6} 
+ \frac{2S_{3}}{3} - \frac{|\It{m}{3}|}{6} 
+ \frac{5S_{4}}{8} - \frac{5|\It{m}{4}|}{32} 
\nonumber\\
&
+ \frac{S_{34}}{2} - \frac{|\It{m}{34}|}{8}
+ \frac{S_{44}}{2} - \frac{|\It{m}{44}|}{8} - \frac{163}{60}
\nonumber\\
\geq&\
\frac{m}{2} + \frac{T_2}{2} + \frac{T_3}{3} + \frac{T_4}{4} + \frac{2S_S}{3} - \frac{m}{6} + \frac{|\It{m}{2}|}{6} 
+ \frac{|\It{m}{3}|}{6} + \frac{|\It{m}{4}|}{6} + \frac{|\It{m}{34}|}{6} + \frac{|\It{m}{44}|}{6} 
\nonumber\\
&
+ \frac{2S_{3}}{3} - \frac{|\It{m}{3}|}{6}
+ \frac{5S_{4}}{8} - \frac{5|\It{m}{4}|}{32}
+ \frac{S_{34}}{2} - \frac{|\It{m}{34}|}{8} + \frac{S_{44}}{2} - \frac{|\It{m}{44}|}{8} 
- \frac{163}{60}
\nonumber\\
>&\
\frac{m}{3} + \frac{T_2}{2} + \frac{T_3}{3} + \frac{T_4}{4} + \frac{|\It{m}{2}|}{6} + \frac{2|\GS|}{3} + \frac{|\Gt{3}|}{3} 
+ \frac{5|\Gt{4}|}{12} + \frac{|\Gt{34}|}{12}  + \frac{|\Gt{44}|}{6} -\frac{163}{60}
\nonumber\\
=&\
\frac{2|\Gt{2}|}{3} + \frac{2|\Gt{3}|}{3} + \frac{2|\Gt{4}|}{3} + \frac{2|\Gt{33}|}{3} + \frac{2|\Gt{34}|}{3} + \frac{2|\Gt{44}|}{3} 
+ \frac{2|\GS|}{3} + \frac{3|\Gt{24}|}{4} + \frac{3|\Gt{444}|}{4} + \frac{5|\Gt{23}|}{6} + \frac{5|\Gt{344}|}{6} 
\nonumber\\
&
+ \frac{11|\Gt{334}|}{12} + |\Gt{22}| + |\Gt{244}| + |\Gt{333}| + |\Gt{4444}| 
+ \frac{13|\Gt{234}|}{12} + \frac{13|\Gt{3444}|}{12} + \frac{7|\Gt{233}|}{6} + \frac{7|\Gt{3344}|}{6} -\frac{173}{60} 
\nonumber\\
\geq&\
\frac{2}{3}\big|\OPTs\big| -\frac{173}{60} 
\end{align*}%
}%
bins, by applying Equalities~(\ref{eq:t2eq})--(\ref{eq:t4eq}) in the second to last inequality above, while using that each $z_i<1/4$, that $S_S\geq\sum_{i\in\IS{m}}z_i$ and $S_{t_1\cdots t_4}\geq\sum_{i\in\It{m}{t_1\cdots t_4}}z_i$, for each bin group \Gt{t_1\cdots t_4}, that critical bin $C_i$ can be covered by a large item of size at least $w_m$ plus the small items from a bin among the last bins $B_{m+1},\ldots,B_{|\Gtwo|}$ in \Gtwo\ and one extra small item from a non-easy bin in the optimal solution; see Figure~\ref{criticalfig}, and that $m=|\IS{m}|+\sum_{t_1\cdots t_4}|\It{m}{t_1\cdots t_4}|$, for any $m$. The competitive ratio is the smallest coefficient of any of the terms corresponding to bin groups, since an adversary can ensure that the groups with larger coefficient contain no bins. This gives a competitive ratio of~$2/3\approx0.6666\ldots$.
\end{proof}

The two advice values $m\leq n$ and $x_m$ can be represented by $O(\log n)$ bits and $O(b)$ bits respectively, where $b$ is the number of bits required to represent the integer denominator of the rational value $x_m$, since $x_m<1$. We have the following immediate theorem.
\begin{theorem}\label{thm:twothirds}
The strategy \DHad\ receives $O(b+\log n)$ bits of advice and has asymptotic competitive ratio
$$
\big|\DHad(\sigma)\big|\geq\frac{2}{3}\big|\OPTs\big| - \frac{173}{60}
$$
for serving any sequence $\sigma$ of size~$n$, where $b$ is the number of bits required to represent any rational value in~$\sigma$.
\end{theorem}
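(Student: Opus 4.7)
The plan is to observe that the quantitative guarantee $|\DHad(\sigma)| \geq (2/3)|\OPTs| - 173/60$ is already established by Lemma \ref{lem:twothirds}, so all that remains is to verify the $O(b + \log n)$ advice bound. The oracle's job is to communicate exactly two quantities to the strategy: the integer $m$ and the rational value $x_m$. I would bound their encoding lengths separately and then invoke the self-delimiting scheme $e(\cdot)$ from Section \ref{sec:intro} to concatenate them on the tape.

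First I would handle the integer $m$. Since any admissible value of $m$ lies in $\{0, 1, \ldots, n_2\}$, where $n_2$ is the number of $2$-items in $\sigma$, and since $n_2 \leq n$, the plain binary representation of $m$ needs at most $\lceil \log(n+1) \rceil$ bits. Wrapping that representation in $e(\cdot)$ adds only an $O(\log\log n)$ overhead, so the total cost of transmitting $m$ is $O(\log n)$ bits.

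Next I would handle the rational value $x_m$. By construction, $x_m$ is one of the items of $\sigma$, hence a rational in $[0,1]$ whose reduced-form denominator is bounded by $2^b$. The preliminaries observe that such a rational admits an $O(b)$-bit encoding, and wrapping it in $e(\cdot)$ costs only an additional $O(\log b)$ term. Concatenating the two self-delimited strings on the tape yields a total advice length of $O(b + \log n)$, and \DHad\ can unambiguously parse them in sequence because each piece carries its own length prefix.

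The main obstacle here is essentially nonexistent: the entire technical weight of the theorem rests in Lemma \ref{lem:twothirds}, and this concluding step is just a packaging exercise. The only point worth double-checking is that two consecutive self-delimited strings remain uniquely decodable from a single tape, which follows immediately from the prefix-free format $u(s) \circ b(s)$ used to announce each length.
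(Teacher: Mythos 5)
Your proposal matches the paper's argument exactly: the paper also treats the theorem as an immediate consequence of Lemma~\ref{lem:twothirds}, noting only that $m\leq n$ costs $O(\log n)$ bits and that $x_m<1$ costs $O(b)$ bits under the self-delimiting encoding from the preliminaries. Your additional remarks about prefix-free parsability of the concatenated advice are a harmless elaboration of the same packaging step.
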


By approximating the advice value $m$ using $O(\log\log n)$ of the most significant bits (using a modified self-delimiting encoding), a slight variation of the analysis above shows that a variant of \DHad\ that receives  $O(b+\log\log n)$ bits of advice still has asymptotic competitive ratio $2/3-1/O(\log n)$, thus the approximate value of $m$ has very little impact on the competitive ratio.

\subsection{Tightness}
One could venture to think that strategy \DHak, for $k>4$, would give improved competitive ratio, or even that extending the strategy with more sets of critical bins could improve it further. However, this is not possible, since an adversary can simply provide an instance where all bin groups except \Gtwo\ in an optimal solution are empty. Thus, the instance consists of only $2$-items and small items. Any critical bin-based strategy must solve this instance and chooses some value for $m$, the number of critical bins to open. Even if the adversary provides all the small items first and the 2-items last, the strategy will cover $m+\big\lfloor(|\Gtwo|-m)/2\big\rfloor$ bins as long as the strategy can guarantee that all $m$ critical bins are covered with one 2-item and some small items. Since the index set is $\It{m}{2}=\{1,\ldots,m\}$, for all $m$, the maximum occurs for $m=\lfloor|\Gtwo|/3\rfloor$. The strategy covers at most
{\small
\begin{align*}
m+\left\lfloor\frac{|\Gtwo|-m}{2}\right\rfloor 
&\ \ = \ \
\left\lfloor\frac{|\Gtwo|}{3}\right\rfloor + \left\lfloor\frac{|\Gtwo|-\lfloor|\Gtwo|/3\rfloor}{2}\right\rfloor
\\
&\hspace*{-5em}
\leq\ \
\frac{|\Gtwo|}{3} + \frac{|\Gtwo|}{3} + \frac{1}{3} 
\ \ \leq \ \
\frac{2}{3}\big|\OPTs\big| + \frac{1}{3}
\end{align*}
}%
bins, proving that our analysis in Lemma~\ref{lem:twothirds} is asymptotically tight.

}

{\small

}

\end{document}